\newcommand{\bra}[1]{\langle#1|}
\newcommand{\ket}[1]{|#1\rangle}
\newcommand{\Tr}{\operatorname{Tr}}
\newcommand{\hilb}[1]{\mathcal{#1}}
\newcommand{\map}[1]{\mathcal{#1}}
\newcommand{\linear}[1]{\boldsymbol{\mathsf{L}}(#1)}
\newcommand{\states}[1]{\boldsymbol{\mathsf{S}}(#1)}
\newcommand{\id}{\mathsf{id}}
\newcommand{\choi}[1]{\boldsymbol{\mathsf{C}}\!\left[{#1}
    \right]}
\newcommand{\invchoi}[1]{\boldsymbol{\mathsf{C}}^{-1}\!\left[{#1}
    \right]}
\renewcommand{\ge}{\geqslant}
\renewcommand{\le}{\leqslant}
\renewcommand{\i}{\operatorname{\imath}}
\def\sH{{\hilb{H}}}
\def\sK{{\hilb{K}}}
\def\>{\rangle}
\def\<{\langle}
\def\mL{{\map{L}}}
\def\mE{{\map{E}}}
\def\mT{{\map{T}}}
\def\openone{\mathds{1}}
\newtheorem{lmm}{Lemma}
\newtheorem{thm}{Proposition}
\begin{document}

\title{Direct observation of any two-point quantum correlation function}

\author{Francesco \surname{Buscemi}}

\email{buscemi@iar.nagoya-u.ac.jp}
\affiliation{Institute for Advanced Research, Nagoya
  University, Chikusa-ku, Nagoya 464-8601, Japan}

\affiliation{Graduate School of Information Science, Nagoya
  University, Chikusa-ku, Nagoya, 464-8601, Japan}

\author{Michele \surname{Dall'Arno}}

\email{michele.dallarno@math.cm.is.nagoya-u.ac.jp}
\affiliation{Graduate School of Information Science, Nagoya
  University, Chikusa-ku, Nagoya, 464-8601, Japan}

\author{Masanao \surname{Ozawa}}

\email{ozawa@is.nagoya-u.ac.jp}
\affiliation{Graduate School of Information Science, Nagoya
  University, Chikusa-ku, Nagoya, 464-8601, Japan}

\author{Vlatko \surname{Vedral}}

\email{phyvv@nus.edu.sg}
\affiliation{Atomic and Laser Physics, Clarendon Laboratory,
  University of Oxford, Parks Road, Oxford OX13PU, United
  Kingdom}

\affiliation{Center for Quantum Technologies, National
  University of Singapore, Republic of Singapore}

\begin{abstract}
  The existence of noncompatible observables in quantum
  theory makes a direct operational interpretation of
  two-point correlation functions problematic. Here we
  challenge such a view by explicitly constructing a
  measuring scheme that, independently of the input state
  $\rho$ and observables $A$ and $B$, performs an unbiased
  optimal estimation of the two-point correlation function
  $\Tr[A \ \rho \ B]$. This shows that, also in quantum
  theory, two-point correlation functions are as operational
  as any other expectation value. A very simple probabilistic
  implementation of our proposal is presented.
\end{abstract}

\maketitle

In the description of stochastic physical processes, it is
important to determine how dynamical variables are
correlated with each other. Correlation functions are
computed as products of two (or more) dynamical variables
(or their powers), averaged over time, or over many sites,
or in both ways. In the simplest case of the average of the
product of two dynamical variables, one usually speaks of
\emph{two-point correlation functions}.

In classical physics, dynamical variables are real-valued
functions of the state of the system. In fact, the state of
the system can be fully specified, in principle, by giving
the values of all its dynamical variables (or its generating
set of variables), at any instant in time. In classical
statistical mechanics, therefore, there is no difficulty in
defining and computing correlation functions, however
complicated, between dynamical variables; as dynamical
variables are all experimentally accessible, so are all
correlation functions.

In quantum mechanics, on the contrary, the relation between
states and dynamical variables is much more subtle. In
particular, the notion of dynamical variables is replaced by
that of \emph{observables}, namely, self-adjoint operators
that can or cannot commute; this is the formal reason for
the existence of ``incompatible'' variables that cannot
simultaneously assume definite values in any
state~\cite{uncertanty}. This feature arguably lies at the
origin of all ``quantum spooks,'' including a prevailing
view that correlation functions are typically ill-defined
for a quantum mechanical system --- if two observables do
not both have a definite value simultaneously, how could one
compute the average of their product then?

Interpretational problems notwithstanding, one still can
formally define \emph{two-point quantum correlation
	functions} as $\Tr[A\ \rho\ B]$, where $\rho$ describes
the state of the system and $A$, $B$ are any two observables
(or, possibly, the same observable at different times, in
which case we more precisely speak of \emph{auto-correlation
	functions}). In fact, such functions are extensively used
in a wide variety of fields, such as quantum statistical
mechanics~\cite{quant-stat}, quantum
thermodynamics~\cite{quant-term}, and quantum field
theory~\cite{quant-field}. The question then naturally
arises, whether quantum correlation functions can be given a
clear operational interpretation.

For the sake of concreteness, suppose we are given a source,
in control, emitting independent systems, all of them in the
same (though unknown) state. We are also given two measuring
apparatuses, as accurate as the theory (classical or
quantum) allows, one to measure observable $A$, the other
for observable $B$. We assume that both apparatuses can be
initialized and re-used an arbitrary number of times. In
classical physics, these assumptions are enough to allow us
to measure, in principle, not only the expected values of
$A$ and $B$, but also any moment of these, i.e. any two-point correlation function. This is
possible because, classically, measurement does not imply
disturbance. Therefore, one can perform successive
measurements of $A$ and $B$ on the same system, collect the
results, and post-process them at will. For example,
auto-correlation functions in classical physics are computed
by measuring a certain observable twice on the same system,
at different times. On the other hand, in quantum theory,
such a simple approach is often impossible due to the
existence of incompatible observables, as a measurement done
now unavoidably disturbs the evolution of the system and,
therefore, the result of a measurement performed on the
system at later times, unless the measurement satisfies
quantum non-demolition conditions; see for instance
Ref.~\cite{CTDSZ80}.

For this reason it seems that two-point correlation functions (and auto-correlation functions, in particular) cannot be interpreted operationally in quantum theory, in the sense that they cannot be directly measured experimentally. In this paper we argue that this would be too hurried a conclusion. Our contribution is to construct a
``black-box''--like approach to quantum correlation functions, working for (but being
independent of) any state $\rho$ and any pair of observables $A$ and
$B$ (see Fig.~\ref{fig:black-box}
below).

\begin{figure}[htb]
  \begin{center}
    \includegraphics[width=\columnwidth]{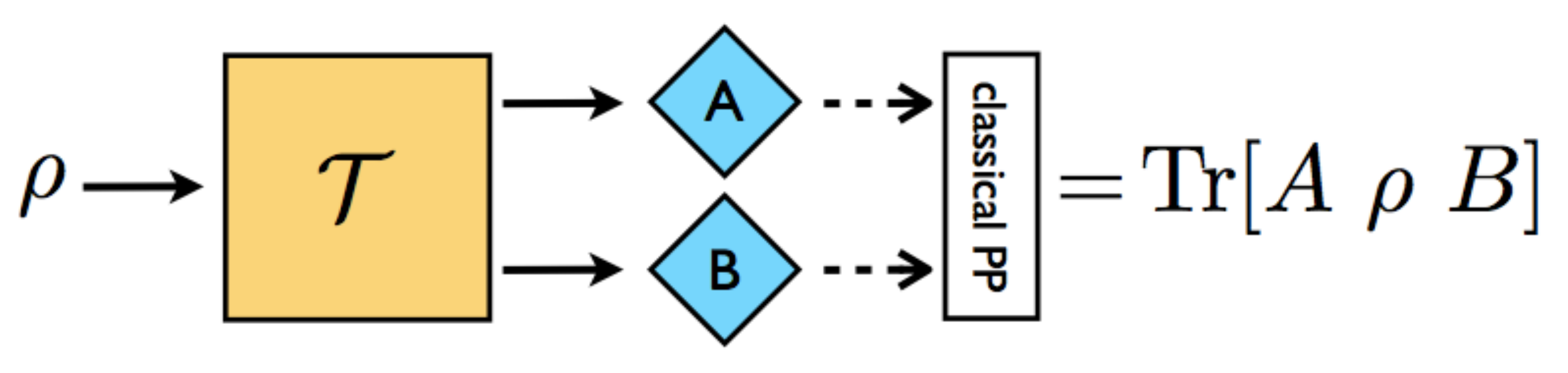}
  \end{center}
  \caption{The \emph{ideal two-point correlator} described
    as a black-box, labeled by $\map{T}$. The quantum
    system, in state $\rho$, is fed into the black-box. The
    output consists of two quantum systems, on which
    independent measurements of observables $A$ and $B$ are
    performed. The data collected are then recombined by a
    purely classical post-processing, resulting in the value
    $\Tr[A\ \rho\ B]$. Notice that both the black-box
    $\map{T}$ and the final classical post-processing are
    independent of $\rho$, $A$ and $B$. In this sense, the
    black-box $\map{T}$ and the post-processing are
    \emph{universal}.}
 \label{fig:black-box}
\end{figure}

 First, the quantum system of interest is fed through a black-box $\map{T}$, what we
 call the ``ideal two-point correlator.'' The black-box in turns produces two output systems, on which the two observables $A$ and $B$ can be independently
 measured, even if they were incompatible. The recorded values are finally post-processed according to a \textit{fixed} post-processing function such that, if the system was initially prepared in state $\rho$, 
 the average result equals $\Tr[A\ \rho\ B]$. Since both the quantum black-box and the classical post-processing are independent of $\rho$, $A$, and $B$, our scheme shows that quantum two-point correlation functions are no less operational than any other expectation value, challenging the common understanding explained above and suggesting, at the same time, new experimental procedures to
 directly measure them.

In the rest of the paper, we will explicitly construct the
black-box and the post-processing function allowing the experimental assessment of any
two-point quantum correlation function of the form
$\Tr[A\ \rho\ B]$, for any state $\rho$ and any pair of observables
$A$ and $B$. Remarkably, both the quantum pre-processing and the
classical post-processing will be independent of $\rho$, $A$
and $B$, thus providing a \emph{universal} strategy. We will
also prove that our strategy is \emph{optimal}, for any
state $\rho$ and any observables $A$ and $B$, in the sense
that it always minimizes the error propagation due to the
final post-processing of data. We will finally present a very simple
probabilistic implementation of our proposal on qubits encoded in the polarization of photons.\medskip

{\em Notation and basic concepts.}---In what follows, we
will only consider quantum systems defined on finite
dimensional Hilbert spaces, denoted by $\sH$ and $\sK$, with
dimensions $d_\sH$ and $d_\sK$, respectively. The set of all
linear operators mapping elements in $\sH$ to elements in
$\sK$ will be denoted by $\linear{\sH,\sK}$, with the
convention that $\linear{\sH}:=\linear{\sH,\sH}$. We will
denote by $\states{\sH}$ the set of all density matrices (or
states), namely all those operators $\rho\in\linear{\sH}$
such that $\rho\ge 0$ and $\Tr[\rho]=1$. The identity matrix
in $\linear{\sH}$ will be denoted by the symbol
$\openone$. In the proofs of our statements, which are
collected in the Supplemental material, we will make use of
well established mathematical results introduced
in~\cite{choi,stine}.\medskip

{\em Formalization.}---We define the \emph{ideal two-point
  quantum correlator} as the linear transformation
$\mT:\linear{\sH}\to\linear{\sH\otimes\sH}$, which acts in
such a way that the following equation,
\begin{align}
  \label{eq:correlator-def}
  \Tr[\map{T}(\rho)\ (A \otimes B) ] := \Tr [ A\ \rho\ B ],
\end{align}
is satisfied for all input states $\rho\in\states{\sH}$ and
all observables $A,B\in\linear{\sH}$. Defining the swap
operator $S\in\linear{\sH\otimes\sH}$ by
$S|\phi,\chi\>=|\chi,\phi\>$ for all $\phi,\chi\in\sH$, the
above equation is equivalent to the following:
\begin{align}
  \label{eq:correlator-action}
  \map{T}(\rho)=S(\openone_\sH\otimes \rho),
\end{align}
for all
$\rho\in\states{\sH}$. Relation~\eqref{eq:correlator-action}
above makes apparent that, on one hand, the map $\mT$ is
linear, but also, on the other, that $\mT$ is not a
\emph{physical} evolution. Such a conclusion is a direct
consequence of the fact that $\mT$ does not preserves
hermiticity, which is a necessary condition for complete
positivity.

However, as we will show in the rest of the paper, even if
the map $\mT$ cannot be realized as a physical evolution, it
can, nonetheless, be given a well motivated operational
interpretation and an experimentally feasible realization
scheme, in terms of \emph{partial expectation values}, a
concept that we will introduce in
Proposition~\ref{prop:physical}.

Before proceeding, we make the following simple
observation. The product of two observables can always be
decomposed as the linear combination of two self-adjoint
operators, namely:
\begin{align*}
  BA = \frac{\{A,B\}}2 -\i \frac{[A,B]}{2\i},
\end{align*}
where $\{A,B\} := AB + BA$ and $[A,B] := AB - BA$ are the
anti-commutator and commutator, respectively, and $\i$
denotes the imaginary unit. By linearity then, any two-point
correlation function can be rewritten as
\begin{align*}
  \Tr[A\ \rho\ B] = \Tr\left[\rho\ \frac{\{A,B\}}2\right]
  -\i \Tr\left[\rho\ \frac{[A,B]}{2\i}\right].
\end{align*}
The above decomposition directly induces an analogous
decomposition of the map $\map{T}$ into its real and
imaginary parts:
\begin{align}
  \label{eq:real-and-im}
  \map{T} = \map{R} - \i \map{I},
\end{align}
where $\map{R}:\linear{\sH}\to\linear{\sH\otimes\sH}$ and $\map{I}:\linear{\sH}\to\linear{\sH\otimes\sH}$ are
defined by
\begin{align}
  \label{eq:real-only}
  \Tr[\map{R}(\rho) \ (A \otimes B)]:= \Tr
  \left[\rho\ \frac{\{A,B\}}{2} \right],
\end{align}
and 
\begin{align}
  \label{eq:im-only}
  \Tr[\map{I}(\rho) \ (A \otimes B)]:=\Tr
  \left[\rho\ \frac{[A,B]}{2\i} \right],
\end{align}
for all $\rho,A,B$.
We notice that, as it was the case for $\mT$, both $\map{R}$ and
$\map{I}$ are linear transformations. Contrarily to $\mT$,
however, they are now both hermiticity-preserving
(HP). Finally, the map $\map{R}$ is easily seen to be also
trace-preserving (TP), while $\Tr[\map{I}(\rho)]=0$, for all
$\rho\in\states{\sH}$.\medskip

{\em Statistical decompositions and partial expectation
  values.}---Suppose that, given a linear HP map
$\map{L}:\linear{\hilb{H}}\to\linear{\hilb{K}}$, one wants
to find a way to experimentally measure the expectation
value $\Tr[\map{L}(\rho)\ A]$, for any input state
$\rho\in\states{\hilb{H}}$ and any observable
$A\in\linear{\hilb{K}}$. The following proposition, proved
in the Supplemental material, provides a way to do so.

\begin{thm}[Statistical Decompositions]\label{prop:decomp}
  Any hermiticity-preserving linear map $\mL:
  \linear{\sH}\to\linear{\sK}$ can be decomposed as
  $\mL=\sum_i\lambda_i\mE_i$ for suitable real coefficients
  $\lambda_i$, where
  $\mE_i:\linear{\sH}\to\linear{\sK}$ are completely
  positive for all $i$, and $\mE:=\sum_i\mE_i$ is
  trace-preserving. [Namely, the collection $\{\mE_i\}_i$
    constitutes a quantum instrument~\cite{instrument}.]
\end{thm}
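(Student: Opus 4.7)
The plan is to reduce the statement to the Choi operator representation of $\mL$. Since $\mL$ is hermiticity-preserving, its Choi operator $\choi{\mL}$ is Hermitian and therefore admits a spectral decomposition
\begin{align*}
  \choi{\mL}=\sum_i\mu_i P_i,
\end{align*}
with $\mu_i\in\mathbb{R}$ and mutually orthogonal rank-one projectors $P_i\ge 0$. Each $P_i$ is itself the Choi operator of a completely positive map $\mE^{(0)}_i:=\invchoi{P_i}$, and linearity of $\invchoi{\cdot}$ yields $\mL=\sum_i\mu_i\mE^{(0)}_i$, which already expresses $\mL$ as a real combination of CP maps.

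The remaining task is to arrange the CP components so that they sum to a trace-preserving map. Let $N:=\sum_i\Tr_\sK[P_i]\ge 0$ and pick any scalar $c\ge\|N\|_\infty$; then rescale $\mE_i:=\mE^{(0)}_i/c$ and $\lambda_i:=c\mu_i$, so that $\mL=\sum_i\lambda_i\mE_i$ still holds and $\sum_i\Tr_\sK[\choi{\mE_i}]=N/c\le\openone_\sH$. Finally, I would append one more CP map $\mE_0:=\invchoi{(\openone_\sH-N/c)\otimes\sigma}$ with $\lambda_0:=0$, where $\sigma$ is any fixed density matrix on $\sK$. Its Choi operator is manifestly positive, the new term does not affect $\mL=\sum_i\lambda_i\mE_i$, and the partial trace of the total Choi operator is now exactly $\openone_\sH$, so $\mE:=\sum_i\mE_i$ is trace-preserving.

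The main obstacle is not the existence of a real CP expansion --- that follows immediately from the spectral decomposition of the Hermitian operator $\choi{\mL}$ combined with Choi's theorem --- but rather imposing simultaneously the instrument (TP) condition and the decomposition identity for $\mL$. The rescaling-plus-padding trick handles this cleanly: once the partial trace of the unscaled positive parts is dominated by a multiple of $\openone_\sH$, the remaining slack can be absorbed into an auxiliary CP map carrying coefficient zero, completing the family into a genuine quantum instrument without disturbing the expression $\mL=\sum_i\lambda_i\mE_i$.
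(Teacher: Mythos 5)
Your proof is correct and follows essentially the same route as the paper's: both diagonalize the Hermitian Choi operator $\choi{\mL}$ and use the positivity of its spectral projectors to obtain the CP components, differing only in the bookkeeping used to enforce trace preservation (the paper keeps the full resolution of the identity $\sum_i\Pi_i=\openone$, including the kernel projector with coefficient zero, and normalizes by $d_\sK$, whereas you rescale by $c\ge\|N\|_\infty$ and pad with a zero-coefficient CP map). The only slip is notational: with the paper's convention $\choi{\mL}\in\linear{\sK\otimes\sH}$ the padding operator should read $\sigma\otimes(\openone_\sH-N/c)$ with $\sigma$ acting on $\sK$, so that its partial trace over $\sK$ supplies exactly the missing $\openone_\sH-N/c$.
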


The content of Proposition~\ref{prop:decomp} is summarized
in Fig.~\ref{fig:stat-sim} below: for any linear HP map
$\map{L}:\linear{\hilb{H}}\to\linear{\hilb{K}}$, there exist
a quantum instrument $\{\map{E}_i\}_i$, with $\map{E}_i
:\linear{\hilb{H}}\to\linear{\hilb{K}}$ CP for all $i$, and
real coefficients $\lambda_i$, such that
\begin{align}
  \label{eq:decomp}
  \Tr[\map{L}(\rho)\ A]=\sum_i\lambda_i
  \Tr[\map{E}_i(\rho)\ A],
\end{align}
for any input state $\rho$ and any observable $A$. Such a
decomposition will be referred to as a \emph{statistical
  decomposition} of the map $\mL$.

\begin{figure}[htb]
  \begin{center}
    \includegraphics[width=\columnwidth]{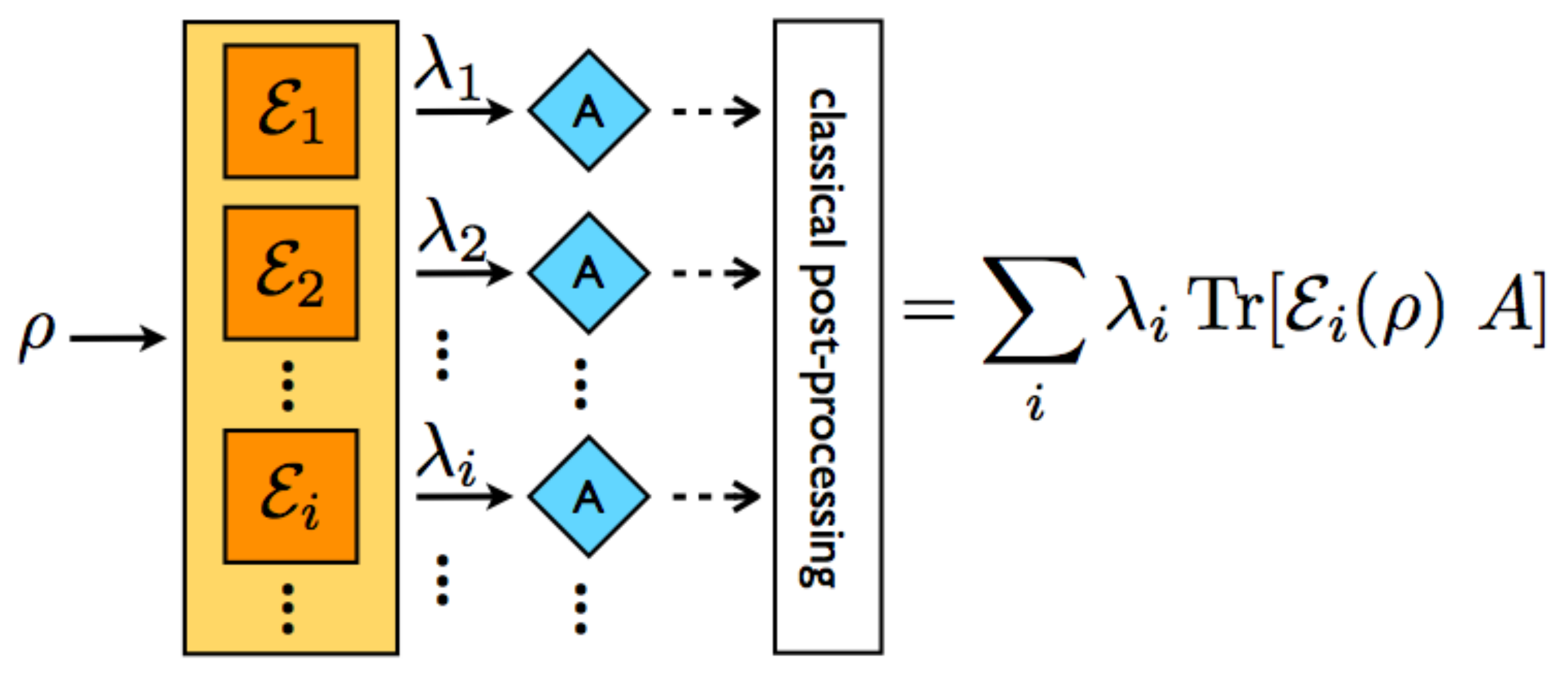}
  \end{center}
  \caption{Statistical decomposition of a non-physical
    transformation: (1)~the initial state $\rho$ goes
    through a quantum instrument, described by a collection
    of CP maps ${\map{E}}_i$; (2)~the outcome $i$, occurring
    with probability $p(i)=\Tr[{\map{E}}_i(\rho)]$, is
    recorded; (3)~the corresponding output state
    $\rho_i={\map{E}}_i(\rho)/p(i)$ is used to evaluate the
    expectation value $\langle A\rangle_i=\Tr[\rho_i\ A]$;
    (4)~all data are finally recombined as
    $\sum_i\lambda_ip(i)\langle A\rangle_i$, for suitable
    real coefficients $\lambda_i$.}
 \label{fig:stat-sim}
\end{figure}

It is important now to notice that, while
Proposition~\ref{prop:decomp} above shows that there always
exists \emph{at least} one statistical decomposition for
every linear HP map, statistical decompositions, as
in~(\ref{eq:decomp}), are in general highly non-unique. In
order to single out an optimal decomposition, an optimality
criterion must be introduced. A natural choice for the
optimality criterion is the statistical
error~\cite{statistics} on the expectation value
$\Tr[\map{L}(\rho) A]$. To define it formally let us rewrite
Eq.~\eqref{eq:decomp} as follows
\begin{align}
  \label{eq:decomp2}
  \Tr[\map{L}(\rho)\ A] & = \sum_i \lambda_i p(i)
  \Tr[\rho_i\ A] \nonumber\\ & = \sum_i \lambda_i p(i)
  \<A\>_i,
\end{align}
where $p(i) := \Tr[ \mE_i (\rho)]$, $\rho_i := p(i)^{-1}
\mE_i(\rho)$, and $\<A\>_i := \Tr[\rho_i\ A]$. Since the
expectation values $\<A\>_i$ all come with their own
statistical error, say $\varepsilon_i$, one has that the
error associated with the linear
combination~\eqref{eq:decomp2} is conservatively evaluated
as $\sum_i| \lambda_i | p(i) \varepsilon_i$. For this
reason, we choose to adopt here the rather conservative
criterion of minimizing $\sum_i|\lambda_i|p(i)$, for all
input states $\rho$.

The following representation theorem (proved in the
Supplemental material) provides an alternative way to interpret
statistical decompositions as \emph{partial
  expectation values} (not to be confused with the
well-established notion of \emph{conditional} expectation
values):
\begin{thm}[Partial Expectation Values]\label{prop:physical}
  For any linear HP map $\mL: \linear{\sH}\to\linear{\sK}$,
  there exists a finite dimensional ancillary quantum system
  $\sK'$, an isometry $V:\sH\to\sK\otimes\sK'$ and an
  observable $Z\in\linear{\sK'}$, such that
  \begin{align}
    \Tr[V\rho V^\dag\ (A\otimes Z)]=\Tr[\mL(\rho)\ A],
  \end{align}
  for all states $\rho\in\states{\sH}$ and all observables
  $A\in\linear{\sK}$. Equivalently,
  \begin{align}
    \mL(\rho)=\Tr_{\sK'}[V\rho V^\dag\ (\openone\otimes Z)],
  \end{align}
  namely, the action of $\mL$ can be written as a ``partial
  expectation value.''
\end{thm}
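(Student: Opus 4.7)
The plan is to combine Proposition~\ref{prop:decomp} with a Stinespring-type dilation of the resulting instrument. Proposition~\ref{prop:decomp} writes $\mL=\sum_i\lambda_i\mE_i$ with $\{\mE_i\}_i$ a quantum instrument (so $\mE:=\sum_i\mE_i$ is CPTP) and the $\lambda_i$ real. The idea is to realize the whole instrument via a single isometry into a tensor-product ancilla together with a POVM on that ancilla, and then to fold the real coefficients $\lambda_i$ into one self-adjoint operator on the ancilla.

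Concretely, I would invoke the standard Stinespring/Ozawa dilation of a quantum instrument: there exist a finite-dimensional Hilbert space $\sK'$, an isometry $V:\sH\to\sK\otimes\sK'$, and a POVM $\{P_i\}_i\subset\linear{\sK'}$ such that
\begin{align*}
  \mE_i(\rho)=\Tr_{\sK'}[V\rho V^\dag\ (\openone_\sK\otimes P_i)]
\end{align*}
for every $i$ and every $\rho\in\states{\sH}$. A direct construction is to take any Kraus representation $\mE_i(\rho)=\sum_j K_{i,j}\rho K_{i,j}^\dag$ and define $V:=\sum_{i,j}K_{i,j}\otimes\ket{i,j}$, which is an isometry precisely because $\mE$ is trace-preserving, together with $P_i:=\sum_j\ketbra{i,j}{i,j}$. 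Finite-dimensionality of $\sK'$ is automatic because Proposition~\ref{prop:decomp} produces only finitely many outcomes in finite dimension.

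Next, define $Z:=\sum_i\lambda_i P_i\in\linear{\sK'}$, which is self-adjoint because the $\lambda_i$ are real and each $P_i$ is positive. Linearity of the partial trace then gives
\begin{align*}
  \mL(\rho)&=\sum_i\lambda_i\mE_i(\rho)=\sum_i\lambda_i\Tr_{\sK'}[V\rho V^\dag\ (\openone_\sK\otimes P_i)]\\
  &=\Tr_{\sK'}[V\rho V^\dag\ (\openone_\sK\otimes Z)],
\end{align*}
which is the desired representation; tracing against any $A\in\linear{\sK}$ yields the expectation-value form $\Tr[V\rho V^\dag\ (A\otimes Z)]=\Tr[\mL(\rho)\ A]$.

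The main obstacle is really the dilation step, namely the existence of a \emph{single} isometry $V$ and an ancillary POVM $\{P_i\}_i$ simultaneously reproducing every $\mE_i$. This is a classical fact (the structure theorem for instruments, closely tied to~\cite{instrument,stine}), but it is the only nontrivial ingredient; once it is granted, the definition of $Z$ and the final identity reduce to bookkeeping with linearity.
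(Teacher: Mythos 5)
Your proposal is correct and follows essentially the same route as the paper's own proof: decompose $\mL$ via Proposition~\ref{prop:decomp}, dilate the resulting instrument to a single isometry $V$ with an ancillary POVM $\{P_i\}_i$, and set $Z:=\sum_i\lambda_i P_i$. Your explicit Kraus-operator construction of $V$ and the check that trace preservation of $\mE$ makes $V$ an isometry are details the paper leaves to the cited Stinespring reference, but the argument is the same.
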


The idea of partial expectation values is depicted in
Fig.~\ref{fig:real} below.

\begin{figure}[htb]
  \begin{center}
    \includegraphics[width=\columnwidth]{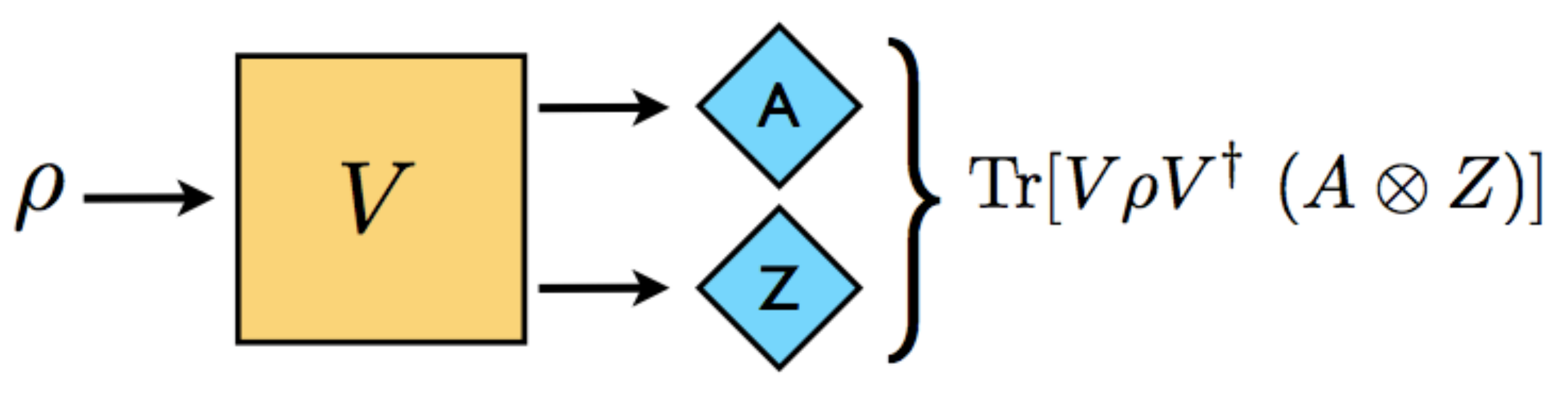}
  \end{center}
  \caption{Statistical decompositions as \emph{partial
      expectation values}: according to
    Proposition~\ref{prop:physical}, $\Tr[V\rho V^\dag\ (A\otimes Z)]=\Tr[\mL(\rho)\ A]$, for
    all input states $\rho$ and all final observables $A$. Notice that the isometry $V$ and
    the ancillary observable $Z$ do not depend neither on
    the input state $\rho$ nor on the final observable $A$,
    but only on the linear HP map $\mL$.}
 \label{fig:real}
\end{figure}

{\em Universal optimal two-point quantum correlator.}---The
proofs of the following Propositions can be found in the
Supplemental material.

\begin{thm}[Universal Optimal Statistical Decomposition of $\map{R}$]
  \label{prop:real-part}
  The map $\map{R}:\linear{\sH}\to\linear{\sH\otimes\sH}$
  representing the real part of the ideal two-point
  correlator $\map{T}$, as in Eqs.~\eqref{eq:real-and-im}
  and~\eqref{eq:real-only}, admits a statistical
  decomposition, which is universally optimal, i.e. optimal
  at once for any input state $\rho\in\states{\sH}$, namely
  \begin{align}
    \label{eq:optimal-dec-real}
    \map{R}=\frac{d_\sH+1}{2}\map{R}_+-\frac{d_\sH-1}{2}\map{R}_-.
  \end{align}
\end{thm}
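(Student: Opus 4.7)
The plan is to first put $\map{R}$ into a transparent operator form using the swap $S\in\linear{\sH\otimes\sH}$. From the defining identity $\Tr[\map{R}(\rho)(A\otimes B)]=\tfrac12\Tr[\rho\{A,B\}]$, together with $\Tr[S(X\otimes Y)]=\Tr[XY]$ and $S(X\otimes Y)=(Y\otimes X)S$, one gets $\map{R}(\rho)=\tfrac12\bigl(S(\openone\otimes\rho)+(\openone\otimes\rho)S\bigr)$. Introducing the symmetric and antisymmetric projectors $P_\pm:=(\openone\pm S)/2$ on $\sH\otimes\sH$ recasts this as
\begin{align*}
\map{R}(\rho)=P_+(\rho\otimes\openone)P_+-P_-(\rho\otimes\openone)P_-.
\end{align*}
Both summands are manifestly CP, and a direct computation gives $\Tr[P_\pm(\rho\otimes\openone)P_\pm]=(d_\sH\pm1)/2$. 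Hence $\map{R}_\pm(\rho):=\tfrac{2}{d_\sH\pm1}P_\pm(\rho\otimes\openone)P_\pm$ are quantum channels and the stated decomposition follows by substitution. A bona fide instrument as in Proposition~\ref{prop:decomp} is then obtained by a standard rescaling-and-padding (a global $1/d_\sH$ factor, plus a zero-coefficient dummy CP map to restore trace-preservation of $\sum_i\mE_i$); this leaves the error $\sum_i|\lambda_i|p(i)$ invariant.

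For universal optimality, I would first reduce an arbitrary statistical decomposition $\map{R}=\sum_i\lambda_i\mE_i$ to the two-element form $\map{R}=\map{A}-\map{B}$, where $\map{A}:=\sum_{\lambda_i>0}|\lambda_i|\mE_i$ and $\map{B}:=\sum_{\lambda_i<0}|\lambda_i|\mE_i$ are both CP; a short calculation gives $\sum_i|\lambda_i|p(i)=\Tr[(\map{A}+\map{B})(\rho)]$, so the task is to lower-bound this universally in $\rho$. The key structural observation is that the two CP pieces of our proposed decomposition output into orthogonal (symmetric vs.\ antisymmetric) subspaces of $\sH\otimes\sH$, so their Choi matrices $C_\pm:=\choi{P_\pm(\,\cdot\,\otimes\openone)P_\pm}$ satisfy $C_+C_-=0$; thus $\choi{\map{R}}=C_+-C_-$ is the Jordan decomposition, with $|\choi{\map{R}}|=C_++C_-$.

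The bound then comes from a block-diagonal argument. Let $\Pi_\pm:=\openone\otimes P_\pm$ act on the total Choi space. Since $\choi{\map{R}}$ is block-diagonal w.r.t.\ $\Pi_\pm$, so is $\choi{\map{A}}-\choi{\map{B}}$; combining the block-wise identities $\Pi_\pm(\choi{\map{A}}-\choi{\map{B}})\Pi_\pm=\pm C_\pm$ with $\choi{\map{A}},\choi{\map{B}}\geq0$ yields the operator bounds $\Pi_\pm(\choi{\map{A}}+\choi{\map{B}})\Pi_\pm\geq C_\pm$. Because $\rho^T\otimes\openone$ commutes with $\Pi_\pm$, the off-diagonal blocks drop out of the trace, giving
\begin{align*}
\Tr[(\map{A}+\map{B})(\rho)]\geq\Tr\bigl[(C_++C_-)(\rho^T\otimes\openone)\bigr]=\Tr\bigl[\rho^T\Tr_{\sH\otimes\sH}(C_++C_-)\bigr].
\end{align*}
A direct partial-trace computation gives $\Tr_{\sH\otimes\sH}C_\pm=\tfrac{d_\sH\pm1}{2}\openone_\sH$, so the right-hand side equals $d_\sH$ for every $\rho$. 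Since our decomposition attains $d_\sH$ universally, optimality follows. The hard part will be the block-diagonal step: the naive operator inequality $\choi{\map{A}}+\choi{\map{B}}\geq C_++C_-$ fails in general even when $C_+C_-=0$, so one must exploit simultaneously the orthogonality of the supports of $C_\pm$ on the output side and the block-diagonality of the weight $\rho^T\otimes\openone$ to cancel the cross terms.
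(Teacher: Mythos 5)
Your proof is correct, and its overall architecture coincides with the paper's: verify the decomposition by direct computation, then lower-bound the cost $\sum_i|\lambda_i|p(i)$ of an arbitrary statistical decomposition by $\Tr\left[|\choi{\map{R}}|\ (\openone_{\sH\otimes\sH}\otimes\sigma)\right]$, which evaluates to $d_\sH$ by the orthogonality of the supports of $\choi{\map{R}_\pm}$ and trace preservation, and check that the proposed decomposition attains $d_\sH$ for every $\rho$. Where you genuinely diverge is in how that lower bound is justified. The paper routes it through Lemma~\ref{prop:optimal-cond}, whose proof asserts the operator inequality $|\choi{\mL}|\le\sum_i|\lambda_i|\choi{\mE_i}$; as you correctly flag, the implication ``$-Y\le X\le Y\Rightarrow |X|\le Y$'' is false for matrices (take $X=\operatorname{diag}(1,-1)$ and $Y=2\openone+\sqrt{2}\,\sigma_x$: then $Y\mp X\ge 0$ but $Y-|X|$ has a negative eigenvalue), so that step is a real gap in the paper's general lemma. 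Your block-diagonal argument --- exploiting that the weight $\openone_{\sH\otimes\sH}\otimes\rho^T$ commutes with the Jordan projections $P_\pm\otimes\openone$ of $\choi{\map{R}}$, so that only the diagonal blocks $\Pi_\pm(\choi{\map{A}}+\choi{\map{B}})\Pi_\pm\ge C_\pm$ contribute to the trace --- supplies exactly the extra structure needed to make the bound valid for this particular map, and is the more rigorous route; it is in effect a correct proof of the lemma in the special case where the weight commutes with the Jordan projections, which is all that is used here. The remaining differences are cosmetic: your $(\rho\otimes\openone)$ in place of the paper's $(\openone\otimes\rho)$ is harmless since $SP_\pm=\pm P_\pm$ makes the two sandwiches equal, and your rescale-and-pad normalization of the instrument is unnecessary --- the paper simply takes $\mE_\pm=\map{R}_\pm/2$, whose sum is already trace-preserving --- but, as you note, it leaves the error functional invariant.
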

  In the above equation, $\map{R}_+$ and $\map{R}_-$ are, respectively, the
  symmetric and anti-symmetric $1\to 2$ optimal cloners
  defined, for any $\rho\in\linear{\sH}$, as
  follows~\cite{werner}:
  \begin{align*}
    \map{R}_\pm(\rho):=\frac {2}{d_\sH\pm
      1}P^{\pm}\left(\openone_\sH\otimes\rho\right) P^{\pm},
  \end{align*}
  where $P^+$ and $P^-$ are the projectors on, respectively,
  the symmetric and antisymmetric subspaces of
  $\sH\otimes\sH$, namely, $P^\pm=\frac 12(\openone\pm S)$
  being $S\in\linear{\sH\otimes\sH}$ the swap operator.

\begin{thm}[Universal Optimal Statistical Decomposition of $\map{I}$]
  \label{prop:im-part}
  The map $\map{I}:\linear{\sH}\to\linear{\sH\otimes\sH}$
  representing the imaginary part of the ideal two-point
  correlator $\map{T}$, as in Eqs.~(\ref{eq:real-and-im})
  and~(\ref{eq:im-only}), admits a statistical
  decomposition, which is universally optimal, i.e. optimal
  at once for any input state $\rho\in\states{\sH}$, namely
  \begin{align}
    \label{eq:optimal-dec-im}
    \map{I} = \frac{\sqrt{d_\sH^2-1}}{2} \map{I}_+ -
    \frac{\sqrt{d_\sH^2-1}}{2} \map{I}_-.
  \end{align}
\end{thm}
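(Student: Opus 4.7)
My plan is to mirror the structure of the proof of Proposition~\ref{prop:real-part}, now adapted to the imaginary part $\mathcal{I}$. The task decomposes into (i) writing down an explicit two-outcome instrument $\{\mathcal{I}_+,\mathcal{I}_-\}$, (ii) verifying algebraically that it realizes the stated decomposition and is a valid statistical decomposition in the sense of Proposition~\ref{prop:decomp}, and (iii) proving the matching lower bound that certifies universal optimality.

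As a preliminary, I would rewrite $\mathcal{I}$ in terms of the symmetric and antisymmetric projectors $P^\pm = \frac{1}{2}(\openone \pm S)$. Using $S = P^+ - P^-$ together with $[P^+ + P^-,\,\cdot\,] = 0$ (so that $[S,\,\cdot\,] = 2[P^+,\,\cdot\,]$), a short calculation starting from Eq.~(\ref{eq:im-only}) yields
\begin{align*}
\mathcal{I}(\rho) \;=\; i\,[P^+,\openone\otimes\rho] \;=\; i\bigl(P^+(\openone\otimes\rho)P^- - P^-(\openone\otimes\rho)P^+\bigr),
\end{align*}
so that $\mathcal{I}$ picks out precisely the off-diagonal blocks of $\openone\otimes\rho$ with respect to the symmetric/antisymmetric splitting of $\sH\otimes\sH$. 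This structural observation naturally suggests constructing $\mathcal{I}_\pm$ from Kraus operators that coherently superpose the two block projectors, with free parameters tuned so that (a) the sum $\mathcal{I}_+ + \mathcal{I}_-$ is trace-preserving (checked using $\Tr_1[P^\pm] = \frac{d_\sH\pm 1}{2}\openone$), and (b) the cross terms $P^+(\openone\otimes\rho)P^\mp$ in the difference $\mathcal{I}_+ - \mathcal{I}_-$ match $\mathcal{I}$ while the diagonal pieces $P^\pm(\openone\otimes\rho)P^\pm$ cancel. Complete positivity of each $\mathcal{I}_\pm$ is automatic from the Kraus form. The geometry of the two blocks, with dimensions $d_\sH(d_\sH\pm 1)/2$, and the geometric-mean-type balance condition between the parameters, is exactly what produces the factor $\sqrt{d_\sH^2-1}=\sqrt{(d_\sH+1)(d_\sH-1)}$ in the coefficient $\sqrt{d_\sH^2-1}/2$.

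Once the decomposition is in hand, the universal error $\sum_i|\lambda_i|\,p_i(\rho)$ collapses, via the instrument relation $p_+(\rho)+p_-(\rho)=1$ together with the equal magnitudes $|\lambda_\pm|=\sqrt{d_\sH^2-1}/2$, to the constant $\sqrt{d_\sH^2-1}/2$ independently of $\rho$. The main obstacle is the matching lower bound: no statistical decomposition of $\mathcal{I}$ whatsoever, even with more outcomes or more general Kraus structure, can achieve a smaller universal error. I would prove this by Choi-space SDP duality. Absorbing signs, any statistical decomposition can be cast as $\mathcal{I} = \mathcal{F}_+ - \mathcal{F}_-$ with $\mathcal{F}_\pm$ CP and the sum (after rescaling by a scalar $\lambda$) TP, so the error minimization reduces to the SDP: minimize $\lambda$ over $C_{\mathcal{F}_\pm}\ge 0$ satisfying $C_{\mathcal{F}_+} - C_{\mathcal{F}_-} = C_\mathcal{I}$ and $\Tr_{\mathrm{out}}[C_{\mathcal{F}_+} + C_{\mathcal{F}_-}] = \lambda\openone$. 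Its dual maximizes a linear functional of $C_\mathcal{I}$ over a Hermitian witness $Z$ subject to a spectral bound of the type $-\openone\le Z\le \openone$ and a partial-trace orthogonality condition encoding the TP constraint. The delicate technical step is to exhibit a dual witness saturating the primal value at $\sqrt{d_\sH^2-1}/2$; a natural candidate has a block-off-diagonal structure across the symmetric/antisymmetric subspaces mirroring the structure of $C_\mathcal{I}$ itself, and the verification reduces to a careful spectral analysis of $C_\mathcal{I}$ in this basis.
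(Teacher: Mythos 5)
Your structural analysis is sound and in fact reproduces the skeleton of the paper's own argument: the paper's maps $\map{I}_\pm(\rho)\propto Q^\pm(\openone\otimes\rho)Q^\mp$, with $Q^+=(Q^-)^\dag=\frac12(\openone+zS)=\frac12\big((1+z)P^++(1-z)P^-\big)$, are exactly the ``coherent superpositions of the two block projectors'' you anticipate, and the identity $\map{I}(\rho)=\i\,[P^+,\openone\otimes\rho]=\i\big(P^+(\openone\otimes\rho)P^--P^-(\openone\otimes\rho)P^+\big)$ you derive is the correct starting point. However, both load-bearing computations are deferred in your plan. First, you never exhibit the tuned parameters: the nontrivial content of the construction is that a single complex phase, $z=(-1+\i\sqrt{d_\sH^2-1})/d_\sH$, simultaneously makes each $\map{I}_\pm$ individually trace-preserving and makes the diagonal blocks cancel in $\map{I}_+-\map{I}_-$ while the off-diagonal blocks reproduce $\map{I}$ with weight $2/\sqrt{d_\sH^2-1}$. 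Asserting that a ``geometric-mean-type balance condition'' produces $\sqrt{(d_\sH+1)(d_\sH-1)}$ is a plausibility argument, not a verification, so step (ii) of your plan is not yet a proof.

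Second, for the lower bound you set up a full SDP and promise a block-off-diagonal dual witness whose feasibility is ``to be verified by a careful spectral analysis''; that witness is never produced, and it is where the difficulty of your route is concentrated. The paper reaches the same bound more cheaply: from $\choi{\mL}=\sum_i\lambda_i\choi{\mE_i}$ one gets the operator inequality $|\choi{\mL}|\le\sum_i|\lambda_i|\choi{\mE_i}$, hence $\sum_i|\lambda_i|p(i)\ge\min_{\sigma}\Tr[|\choi{\mL}|\,(\openone\otimes\sigma)]$ (Lemma~\ref{prop:optimal-cond}); this is precisely the weak-duality direction of your SDP, and the only inputs it needs are that $\choi{\map{I}_+}$ and $\choi{\map{I}_-}$ are positive, mutually orthogonal, and satisfy $\Tr_{\sK}[\choi{\map{I}_\pm}]=\openone_\sH$, so that $|\choi{\map{I}}|=\frac{\sqrt{d_\sH^2-1}}{2}(\choi{\map{I}_+}+\choi{\map{I}_-})$ and the bound evaluates to $\sqrt{d_\sH^2-1}$ for every $\sigma$. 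I recommend replacing your strong-duality step with this inequality (note also that your constraint $\Tr_{\mathrm{out}}[C_{\map{F}_+}+C_{\map{F}_-}]=\lambda\openone$ restricts to decompositions with state-independent error, which is not general enough for the universal claim). Finally, watch the normalization: a legitimate instrument requires $\mE_\pm=\map{I}_\pm/2$ so that $\mE_++\mE_-$ is TP, whence $|\lambda_\pm|=\sqrt{d_\sH^2-1}$, $p(\pm)=1/2$, and the universal error is $\sqrt{d_\sH^2-1}$, not $\sqrt{d_\sH^2-1}/2$.
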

In the above equation, $\map{I}_+$ and $\map{I}_-$ are
defined, for any $\rho\in\linear{\sH}$, as follows:
\begin{align*}
  \map{I}_\pm(\rho):=\frac{2d_\sH}{(d_\sH^2-1)}
  Q^{\pm}\left(\openone_\sH\otimes\rho\right) Q^{\mp},
\end{align*}
where $Q^+=(Q^-)^\dag:=\frac 12(\openone+zS)$, being
$S\in\linear{\sH\otimes\sH}$ the swap operator and
$z=(-1+\i\sqrt{d_\sH^2-1})/d_\sH$ a complex phase.

{\em Conclusions.}---In this work we provided two point
correlation functions with a new operational
interpretation. We did this by explicitly constructing a
``universal optimal two-point quantum correlator,'' namely,
a measuring apparatus which, independently of $\rho$, $A$,
and $B$, performs an unbiased optimal (in a statistical
sense) estimation of the ideal two-point correlation
function $\Tr[A \ \rho \ B]$. This proves that, despite the
interpretational difficulties due to noncommutativity of $A$
and $B$, two-point correlation functions are as operational
as any other expectation value.

We conclude with a proposal for an experiment
probabilitistically implementing the real part $\map{R}$ of
the universal two-point correlator. Our proposal is depicted
in Fig.~\ref{fig:experiment} in the case of qubits encoded
on photons polarization.  (The case of the imaginary part
$\map{I}$ is more involved: an approximate experimental
implementation, rigorous only in the limit $d\to\infty$,
will be discussed elsewhere, based on results in
Ref.~\cite{CSBDF08}).

\begin{figure}[htb]
  \begin{center}
    \includegraphics[width=\columnwidth]{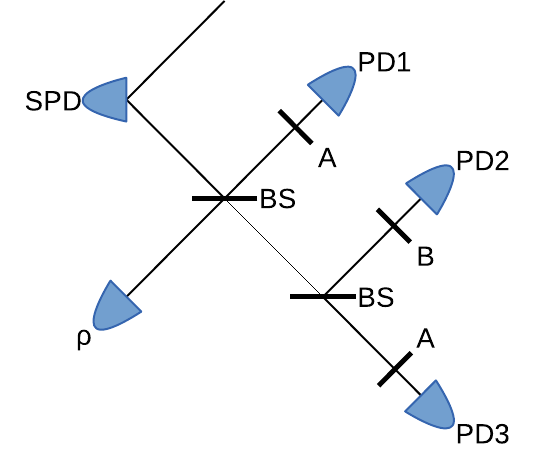}
  \end{center}
  \caption{Experimental proposal for the probabilistic
    implementation of the real part of the universal optimal
    two-point correlator for qubits. Thin lines represent
    optical qubits encoded in the polarization of photons,
    BS is a $50/50$ beamsplitter, SPD is a source of
    maximally entangled photons through spontaneous
    parametric downconversion, $\rho$ is the input state fed
    into the universal optimal two-point correlator, $A$ and
    $B$ are the phase shifters implementing the
    corresponding observables, PD1, PD2, and PD3 are
    photodetectors. A coincidence occurs between PD2 and PD3
    (resp., PD1 and PD2) with probability $3/16$ (resp.,
    $1/16$), in this case optimal universal symmetric
    (resp., antisymmetric) cloning has been performed. Other
    cases are discarded. Averaging over the output
    statistics with the weights given by
    Eq.~\eqref{eq:optimal-dec-real}, one recovers the
    correlation function $\Tr[A \ \rho \ B]$.}
  \label{fig:experiment}
\end{figure}

The system first interacts with a maximally entangled photon
generated by spontaneous parametric downconversion on a
$50/50$ beamsplitter. One of the two output modes is further
splitted by another $50/50$ beamsplitter. Finally, phase
shifters, corresponding to operators $A$ and $B$, are
applied on each output mode, and photodetection is performed
(preceded by polarizing beamsplitters in order to spatially
separate the two polarizations). Despite the present
experimental proposal covers only the case of the real part
(corresponding, as per Eq.~(\ref{eq:real-only}), to the
anti-commutator $\{A,B\}$), it is already enough to provide
new experimental tests of noise-disturbance
relations~\cite{exper-noise-dist} and quantumness
witnesses~\cite{quantumness}.

F. B. was supported by the Program for Improvement of
Research Environment for Young Researchers from Special
Coordination Funds for Promoting Science and Technology
(SCF) commissioned by the Ministry of Education, Culture,
Sports, Science and Technology (MEXT) of Japan. M. D. was
supported by JSPS (Japan Society for the Promotion of
Science) Grant-in-Aid for JSPS Fellows No. 24-0219.
M. O. is supported by the John Templeton Foundations, ID
\#35771, and MIC SCOPE, No. 121806010. This work is
supported by JSPS KAKENHI, No. 21244007.

\onecolumngrid
\newpage

\newpage 

\appendix

\section*{Supplemental material}

In what follows, we will only consider quantum systems
defined on finite dimensional Hilbert spaces, denoted by
$\sH$ and $\sK$, with dimensions $d_\sH$ and $d_\sK$,
respectively. The set of all linear operators mapping
elements in $\sH$ to elements in $\sK$ will be denoted by
$\linear{\sH,\sK}$, with the convention that
$\linear{\sH}:=\linear{\sH,\sH}$. We will denote by
$\states{\sH}$ the set of all density matrices (or states),
namely all those operators $\rho\in\linear{\sH}$ such that
$\rho\ge 0$ and $\Tr[\rho]=1$. The identity matrix in
$\linear{\sH}$ will be denoted by the symbol $\openone$. The
term observable will be used as a synonim of self-adjoint
operator. The identity map on $\linear{\hilb{H}}$ will be
denoted by $\id$. The maximally entangled state $d^{-1}
\sum_{i,j} \ket{i,i} \bra{j,j}$ will be denoted by
$\Phi^+$. The swap operator will be denoted by $S$, namely
$S \in \linear{\hilb{H} \otimes \hilb{H}}$ is the unitary
self-adjoint operator acting as $S \ket{\psi,\phi} =
\ket{\phi,\psi}$, for all $\ket{\psi}, \ket{\phi} \in
\hilb{H}$. The projectors on the symmetric and antisymmetric
subspaces will be denoted by $P_\pm = \frac12 (\openone \pm
S)$, respectively.

Given a linear map $\mL:\linear{\sH}\to\linear{\sK}$, the
so-called \emph{Choi isomorphism}~\cite{choi} provides a way
to associate $\mL$ with an operator
$\choi{\mL}\in\linear{\sK\otimes\sH}$, whose matrix, in the
standard representation given by the computational basis
$\{|k\>\}$, is defined as
\begin{align}
  \label{eq:choi-def}
  \choi{\mL}:=d(\mL\otimes\id)(|\Phi^+\>\<\Phi^+|),
\end{align}
being $|\Phi^+\>$ the standard maximally entangled state
introduced above. The inverse correspondence works as
follows: given an operator $J\in\linear{\sK\otimes\sH}$, the
Choi isomorphism constructs a linear map
$\invchoi{J}:\linear{\sH}\to\linear{\sK}$ defined, for all
$M\in\linear{\sH}$, by
\begin{align}
  \label{eq:choi-inv}
  \invchoi{J}(M) := \Tr_\sH\left[J\ (\openone_{\sK}\otimes
    M^T)\right],
\end{align}
where the exponent $T$ denotes the transposition with
respect to the computational basis $\{|k\>\}$. The
importance of the Choi isomorphism lies in the following three properties:
\begin{enumerate}
\item linearity,
  i.e. $\choi{a\mL+b\mL'}=a\choi{\mL}+b\choi{\mL'}$ and
  $\invchoi{aJ+bJ'}=a\invchoi{J}+b\invchoi{J'}$;
\item bijectivity, i.e. $\invchoi{\choi{\mL}}=\mL$ and
  $\choi{\invchoi{J}}=J$;
\item finally, and more importantly, the map $\mL$ is
  completely positive (CP) if and only if the corresponding
  operator $\choi{\mL}$ is non-negative.
\end{enumerate}
Other properties, which follow easily from the definition,
are the following:
\begin{enumerate}
\setcounter{enumi}{3}
\item the map $\mL$ is hermiticity-preserving (HP), if and
  only if the corresponding operator $\choi{\mL}$ is
  hermitian;
\item the map $\mL$ is trace-preserving (TP), if and only if
  the corresponding operator $\choi{\mL}$ satisfies the
  normalization condition
  $\Tr_{\sK}[\choi{\mL}]=\openone_\sH$.
\end{enumerate}

We now prove Proposition~\ref{prop:decomp}, that we restate here
for convenience.
\begin{thm}[Statistical Decompositions]
  Any hermiticity-preserving linear map $\mL:
  \linear{\sH}\to\linear{\sK}$ can be decomposed as
  $\mL=\sum_i\lambda_i\mE_i$ for suitable coefficients
  $\lambda_i\in\mathbb{R}$, where
  $\mE_i:\linear{\sH}\to\linear{\sK}$ are completely
  positive for all $i$, and $\mE:=\sum_i\mE_i$ is
  trace-preserving.
\end{thm}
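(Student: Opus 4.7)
The natural approach is to pass through the Choi isomorphism, which converts the problem about linear maps into a problem about Hermitian matrices, and then use nothing more than the spectral theorem.

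First I would invoke property~4 of the Choi isomorphism stated right before the proposition: since $\mL$ is HP, the Choi operator $\choi{\mL}\in\linear{\sK\otimes\sH}$ is Hermitian. Therefore it admits a spectral decomposition, which I would package as $\choi{\mL}=J_{+}-J_{-}$ with $J_{\pm}\ge 0$ (pool all the positive eigenprojectors into $J_+$ and all the negative ones, with sign flipped, into $J_-$). Applying the inverse Choi map (linearity, property~1) yields $\mL=\mL_{+}-\mL_{-}$ with $\mL_{\pm}:=\invchoi{J_{\pm}}$ completely positive by property~3. This already gives a decomposition with real coefficients $\lambda_{1}=+1$, $\lambda_{2}=-1$; what is still missing is the trace-preservation of $\mE:=\mL_{+}+\mL_{-}$.

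The trace-preservation can be enforced with one extra ``dummy'' term carrying coefficient $\lambda=0$, at the price of a global rescaling. Concretely, set $M:=\Tr_{\sK}[\choi{\mL_{+}}+\choi{\mL_{-}}]\in\linear{\sH}$, which is a positive operator by construction, and pick any scalar $c\ge\|M\|_{\infty}$ so that $c\openone_{\sH}-M\ge 0$. Then I would define
\begin{align*}
\mE_{1}:=\tfrac{1}{c}\mL_{+},\qquad \mE_{2}:=\tfrac{1}{c}\mL_{-},\qquad \mE_{3}:=\invchoi{\,\sigma_{\sK}\otimes(\openone_{\sH}-M/c)\,},
\end{align*}
for an arbitrary fixed state $\sigma_{\sK}\in\states{\sK}$, with associated coefficients $\lambda_{1}=+c$, $\lambda_{2}=-c$, $\lambda_{3}=0$. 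The Choi of $\mE_{3}$ is manifestly positive (tensor of two PSD operators), so $\mE_{3}$ is CP by property~3, and a direct partial-trace computation gives $\Tr_{\sK}[\choi{\mE_{1}+\mE_{2}+\mE_{3}}]=M/c+(\openone_{\sH}-M/c)=\openone_{\sH}$, which is the TP condition via property~5. By construction $\sum_{i}\lambda_{i}\mE_{i}=\mL_{+}-\mL_{-}=\mL$.

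The decomposition into CP pieces is essentially immediate once the Choi point of view is adopted; the only real subtlety I anticipate is the normalization step, namely convincing oneself that one can always absorb the ``missing mass'' $\openone_{\sH}-M/c$ into a single extra CP map without spoiling the identity $\sum_{i}\lambda_{i}\mE_{i}=\mL$. The trick of letting the compensating term carry coefficient $\lambda_{3}=0$ (and rescaling the other two coefficients by $c$) handles this cleanly, and the choice of $\sigma_{\sK}$ is immaterial. All other steps are routine applications of the Choi correspondence and the spectral theorem.
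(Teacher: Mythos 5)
Your argument is correct, and it rests on the same foundation as the paper's proof --- hermiticity of $\choi{\mL}$ plus the spectral theorem, with CP-ness read off from positivity of Choi operators --- but the decomposition you build is different, chiefly in how trace preservation of $\mE=\sum_i\mE_i$ is arranged. The paper keeps the full spectral resolution $\choi{\mL}=\sum_i\mu_i\Pi_i$ with $\sum_i\Pi_i=\openone_{\sK\otimes\sH}$ (note that the kernel projector, carrying $\mu_i=0$, plays exactly the role of your dummy term) and rescales uniformly by the fixed constant $d_\sK$, so that $\choi{\mE}=d_\sK^{-1}\openone_{\sK\otimes\sH}$ and TP is automatic; the instrument obtained is the ``measure the eigenbasis of $\choi{\mL}$'' instrument and the normalization is universal. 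You instead coarse-grain to the Jordan decomposition $\choi{\mL}=J_+-J_-$ and restore TP by hand with a compensating CP map of coefficient zero, at the cost of a map-dependent normalization $c\ge\|M\|_\infty$ with $M=\Tr_\sK[|\choi{\mL}|]$. Both are valid, and both saturate the operator inequality $\sum_i|\lambda_i|\choi{\mE_i}\ge|\choi{\mL}|$ that underlies the optimality criterion of Lemma~\ref{prop:optimal-cond}; yours uses only three terms, while the paper's avoids the explicit patching step. Two trivial points worth tightening: require $c>0$ (your condition $c\ge\|M\|_\infty$ degenerates when $\mL=0$, a case that is anyway trivial), and note that $\mE_3$ may be the zero map when $M=c\openone_\sH$, which is harmless since the zero map is CP.
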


\begin{proof}
  We already saw that the operator
  $\choi{\mL}\in\linear{\sK\otimes\sH}$ is hermitian,
  whenever the map $\mL$ is HP. We can therefore diagonalize
  $\choi{\mL}$ as $\choi{\mL}=\sum_i\mu_i\Pi_i$, with
  $\mu_i\in\mathbb{R}$ and $\Pi_i$ orthogonal projectors
  such that
  $\sum_i\Pi_i=\openone_{\sK}\otimes\openone_{\sH}$. The
  statement is recovered simply by normalizing by $d_\sK$,
  namely, $\lambda_i:=d_\sK\mu_i$, and
  $\mE_i:=\invchoi{d_\sK^{-1}\Pi_i}$.
\end{proof}

The following Lemma provides a lower bound on the
statistical error $\sum_i|\lambda_i|p(i)$, with $p(i) :=
\Tr[ \mE_i (\rho)]$, associated to statistical decomposition
$\mL=\sum_i\lambda_i\mE_i$.
\begin{lmm}\label{prop:optimal-cond}
  Given a HP linear map $\mL: \linear{\sH}\to\linear{\sK}$,
  for any statistical decomposition
  $\mL=\sum_i\lambda_i\mE_i$ and any input state
  $\rho\in\states{\sH}$,
  \begin{equation}\label{eq:optimal-cond}
    \sum_i|\lambda_i|p(i) \ge \min_{\sigma\in\states{\sH}}
    \Tr \left[ |\choi{\mL}| \ (\openone_\sK\otimes\sigma)
      \right],
  \end{equation}
  where $p(i):=\Tr[\mE_i(\rho)]$.
\end{lmm}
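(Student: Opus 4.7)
The strategy is to prove a stronger, input-dependent inequality,
\begin{align*}
\sum_i |\lambda_i|\,p(i) \ge \Tr\!\big[|\choi{\mL}|\,(\openone_\sK\otimes\rho^T)\big],
\end{align*}
from which the claim follows at once, since $\rho^T$ is a legitimate state and therefore the right-hand side is bounded below by $\min_{\sigma\in\states{\sH}}\Tr[|\choi{\mL}|(\openone\otimes\sigma)]$.

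To establish the sharpened bound, I would use the sign (polar) decomposition of the hermitian operator $\choi{\mL}$. Diagonalizing $\choi{\mL}=\sum_j \mu_j \ketbra{v_j}{v_j}$ with $\mu_j\in\mathbb{R}$, I set $U:=\sum_j \operatorname{sgn}(\mu_j)\ketbra{v_j}{v_j}$ (with the convention $\operatorname{sgn}(0):=1$), so that $U$ is hermitian, $\|U\|_\infty\le 1$, and $U\,\choi{\mL}=|\choi{\mL}|$. Linearity of the Choi isomorphism yields $\choi{\mL}=\sum_i\lambda_i\choi{\mE_i}$, and therefore
\begin{align*}
\Tr\!\big[|\choi{\mL}|(\openone\otimes\rho^T)\big]
=\sum_i\lambda_i\,\Tr\!\big[U\,\choi{\mE_i}(\openone\otimes\rho^T)\big].
\end{align*}

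The key technical ingredient is a matrix H\"older-type estimate: for any $A,B\ge 0$ and any hermitian $U$ with $\|U\|_\infty\le 1$, cyclicity of the trace rewrites $\Tr[UAB]=\Tr[UC]$ with $C:=A^{1/2}BA^{1/2}\ge 0$, and the operator inequality $-\openone\le U\le\openone$ conjugated by $C^{1/2}$ and then traced gives $|\Tr[UAB]|\le \Tr[AB]$. Applying this with $A=\choi{\mE_i}\ge 0$ and $B=\openone\otimes\rho^T\ge 0$, together with the identity $\Tr[\mE_i(\rho)]=\Tr[\choi{\mE_i}(\openone\otimes\rho^T)]$ obtained directly from Eq.~\eqref{eq:choi-inv}, yields $|\Tr[U\choi{\mE_i}(\openone\otimes\rho^T)]|\le p(i)$. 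Combining these with the triangle inequality, and noting that $\Tr[|\choi{\mL}|(\openone\otimes\rho^T)]$ is manifestly non-negative (trace of a product of two PSD operators),
\begin{align*}
\Tr\!\big[|\choi{\mL}|(\openone\otimes\rho^T)\big]
= \Big|\sum_i\lambda_i\Tr[U\choi{\mE_i}(\openone\otimes\rho^T)]\Big|
\le \sum_i|\lambda_i|p(i),
\end{align*}
which is the desired inequality. The only mildly delicate bookkeeping point I anticipate is the harmless sign convention on $\ker\choi{\mL}$ in defining $U$; once fixed, the sign-operator trick together with the H\"older estimate does all the work.
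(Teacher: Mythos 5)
Your overall strategy---insert the sign operator $U$ of $\choi{\mL}$, expand $\Tr[|\choi{\mL}|(\openone\otimes\rho^T)]=\sum_i\lambda_i\Tr[U\choi{\mE_i}(\openone\otimes\rho^T)]$ by linearity of the Choi isomorphism, and bound each term by $p(i)$---is a genuinely different route from the paper's, which instead asserts the operator inequality $|\choi{\mL}|\le\sum_i|\lambda_i|\choi{\mE_i}$ and pairs it with $\openone_\sK\otimes\sigma$. Unfortunately your key technical ingredient is false: it is \emph{not} true that $|\Tr[UAB]|\le\Tr[AB]$ for all $A,B\ge0$ and hermitian $U$ with $\|U\|_\infty\le1$. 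Take $U=\sigma_x$, $A=\ketbra{0}{0}$, $B=\ketbra{\psi}{\psi}$ with $\ket{\psi}=\cos\theta\ket{0}+\sin\theta\ket{1}$: then $\Tr[UAB]=\cos\theta\sin\theta$ while $\Tr[AB]=\cos^2\theta$, and the former exceeds the latter for every $\theta\in(\pi/4,\pi/2)$. The error is in the step ``cyclicity of the trace rewrites $\Tr[UAB]=\Tr[UC]$ with $C=A^{1/2}BA^{1/2}$'': the string $UA^{1/2}BA^{1/2}$ is not a cyclic permutation of $UAB$, and indeed $\Tr[UA^{1/2}BA^{1/2}]=\Tr[A^{1/2}UA^{1/2}B]\neq\Tr[UAB]$ unless $U$ and $A$ commute (in the example above $\Tr[UA^{1/2}BA^{1/2}]=0$). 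The subsequent estimate $|\Tr[UC]|\le\Tr[C]$ for $C\ge0$ is correct, but it is being applied to the wrong quantity; the honest H\"older bound is $|\Tr[UAB]|\le\|AB\|_1$, which is in general strictly larger than $\Tr[AB]$.

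Consequently the term-wise bound $|\Tr[U\choi{\mE_i}(\openone\otimes\rho^T)]|\le p(i)$ is not established, and neither is your sharpened input-dependent inequality. Be aware that what you are attempting is essentially the trace form, against the fixed positive operator $\openone\otimes\rho^T$, of the implication ``$-Y\le X\le Y\Rightarrow\Tr[|X|P]\le\Tr[YP]$,'' and this implication fails for general $P\ge0$ (e.g.\ $X=\operatorname{diag}(1,-1)$, $Y=\openone+\tfrac{2}{5}\sigma_x+\tfrac1{10}\openone$, $P=\ketbra{-}{-}$ gives $\Tr[|X|P]=1>0.7=\Tr[YP]$), because the operator absolute value is not monotone in this sense. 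So any correct argument must exploit additional structure---the tensor-product form $\openone_\sK\otimes\sigma$, the normalization $\Tr_\sK[\sum_i\choi{\mE_i}]=\openone_\sH$, or the minimization over $\sigma$ on the right-hand side of Eq.~\eqref{eq:optimal-cond}---none of which enters your estimate. As it stands, the proof has a genuine gap at its central step.
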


\begin{proof}
  For any statistical decomposition
  $\mL=\sum_i\lambda_i\mE_i$, the linearity of the Choi
  isomorphism implies that
  $\choi{\mL}=\sum_i\lambda_i\choi{\mE_i}$. This implies that
  $|\choi{\mL}|\le\sum_i|\lambda_i|\choi{\mE_i}$, which in turn implies $\Tr\left[|\choi{\mL}|\ (\openone_\sK\otimes\sigma)\right]\le\sum_i|\lambda_i|\Tr\left[\choi{\mE_i}\ (\openone_\sK\otimes\sigma)\right]$ for all $\sigma\ge0$. The statement is recovered by minimizing over $\sigma$ the left-hand side.
\end{proof}

We now prove Proposition~\ref{prop:physical}, that we
restate for convenience.
\begin{thm}[Partial Expectation Values]
  For any linear HP map $\mL: \linear{\sH}\to\linear{\sK}$,
  there exists a finite dimensional ancillary quantum system
  $\sK'$, an isometry $V:\sH\to\sK\otimes\sK'$ and an
  observable $Z\in\linear{\sK'}$, such that
\begin{equation}
  \Tr[V\rho V^\dag\ (A\otimes Z)]=\Tr[\mL(\rho)\ A],
\end{equation}
for all states $\rho\in\states{\sH}$ and all observables
$A\in\linear{\sK}$. Equivalently,
\begin{equation}
  \mL(\rho)=\Tr_{\sK'}[V\rho V^\dag\ (\openone\otimes Z)],
\end{equation}
namely, the action of $\mL$ can be written as a partial
expectation value.
\end{thm}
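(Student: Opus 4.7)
The plan is to combine Proposition~\ref{prop:decomp} (Statistical Decompositions) with a Stinespring-style dilation of the resulting quantum instrument, packing the real coefficients $\lambda_i$ into the eigenvalues of the ancillary observable $Z$.

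First, I invoke Proposition~\ref{prop:decomp} to write the HP map as $\mL=\sum_i\lambda_i\mE_i$ with $\lambda_i\in\mathbb{R}$, each $\mE_i:\linear{\sH}\to\linear{\sK}$ completely positive, and $\mE:=\sum_i\mE_i$ trace-preserving. Since $\mE$ is a CPTP map and $\{\mE_i\}$ is a quantum instrument refining it, the standard Stinespring/Ozawa dilation of instruments (see Refs.~\cite{stine,instrument}) produces a finite-dimensional ancilla $\sK'$, chosen large enough to carry an orthonormal basis $\{|i\>\}$ labelled by the decomposition index, and an isometry $V:\sH\to\sK\otimes\sK'$ such that
\begin{align*}
\mE_i(\rho)=\Tr_{\sK'}\!\left[V\rho V^\dag\ (\openone_{\sK}\otimes|i\>\<i|)\right],
\end{align*}
for all $\rho\in\states{\sH}$ and all indices $i$.

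Next, I define the ancillary observable
\begin{align*}
Z:=\sum_i\lambda_i\ |i\>\<i|\in\linear{\sK'},
\end{align*}
which is self-adjoint because the $\lambda_i$ are real. Then by linearity and the dilation identity,
\begin{align*}
\Tr[V\rho V^\dag\ (A\otimes Z)]&=\sum_i\lambda_i\Tr\!\left[V\rho V^\dag\ (A\otimes|i\>\<i|)\right]\\
&=\sum_i\lambda_i\Tr[\mE_i(\rho)\ A]\\
&=\Tr[\mL(\rho)\ A],
\end{align*}
for all states $\rho$ and all observables $A\in\linear{\sK}$, which is the first claimed equality. The second (operator) form follows by noting that $A\in\linear{\sK}$ is arbitrary, so $\mL(\rho)=\Tr_{\sK'}[V\rho V^\dag\ (\openone\otimes Z)]$.

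I do not expect a serious obstacle here: the only nontrivial ingredient is the existence of a Stinespring dilation simultaneously representing every $\mE_i$ via a single isometry with a fixed projective ancillary measurement, which is a standard construction for quantum instruments. The conceptual point worth emphasizing is that the non-positivity and non-normalization of $\mL$ (it need not be CP, nor even TP when $\Tr\circ\mL\neq\Tr$) are entirely absorbed into the spectrum of $Z$: negative and possibly large-modulus $\lambda_i$ are permissible eigenvalues of a self-adjoint operator, whereas they would be forbidden as weights of a probability distribution. This is precisely what makes $\mL$ accessible through a partial expectation value even though it is not a physical evolution.
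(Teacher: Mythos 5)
Your proof is correct and follows essentially the same route as the paper's: a statistical decomposition via Proposition~\ref{prop:decomp}, a Stinespring/Ozawa dilation of the resulting instrument through a single isometry $V$ with a fixed ancillary measurement, and absorption of the real weights $\lambda_i$ into the spectrum of $Z$. The only cosmetic difference is that the paper writes the ancillary measurement as a general POVM $\{P^i\}_i$ and sets $Z=\sum_i\lambda_i P^i$, whereas you use rank-one projectors $|i\>\<i|$; since each $\mE_i$ may require several Kraus operators, $P^i$ is in general a higher-rank projector, but this changes nothing in the argument.
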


\begin{proof}
  Let $\mL(\rho)=\sum_i\lambda_i\mE_i(\rho)$ be a
  statistical decomposition of $\mL$. Then, following
  Stinespring's representation theorem~\cite{stine}, there
  exist $\sK'$ ancillary Hilbert space,
  $V:\sH\to\sK\otimes\sK'$ isometry, and $\{P^i\}_i$ POVM on
  $\sK'$ such that
\begin{equation*}
  \mE_i(\rho)=\Tr_{\sK'}[V\rho
    V^\dag\ (\openone_{\sK}\otimes P^i_{\sK'})].
\end{equation*}
The statement is recovered by setting
$Z:=\sum_i\lambda_iP^i$.
\end{proof}

According to Eqs.~\eqref{eq:real-only}
and~\eqref{eq:im-only}, the real part $\map{R}$ and the
imaginary part $\map{I}$ of the ideal two-point correlator
are defined as
\begin{align}
  \label{eq:timerdef}
  \Tr[(A \otimes B) \ \map{R}(\rho)] := \Tr \left[
    \frac{\{A,B\}}{2}\ \rho \right],\\
  \label{eq:timeirdef}
  \Tr[(A \otimes B) \ \map{I}(\rho)] := \Tr \left[
    \frac{[A,B]}{2 \i}\ \rho \right],
\end{align}
for any observables $A, B$ and any state $\rho$. Their
action can be written as
\begin{align}
  \label{eq:timeract}
  \map{R}(\rho) = \frac{(\openone \otimes \rho) S + S
    (\openone \otimes \rho)}{2},\\
  \label{eq:timeiact}
  \map{I}(\rho) = \frac{(\openone \otimes \rho) S - S
    (\openone \otimes \rho)}{2\i},
\end{align}
where $S$ is the swap operator, and their Choi operators are
given by
\begin{align}
  \label{eq:timerchoi}
  \choi{\map{R}} = \frac{d}2 \left[ (\openone_1 \otimes
    \Phi^+_{2,3}) (S_{1,2} \otimes \openone_3) + (S_{1,2}
    \otimes \openone_3) (\openone_1 \otimes \Phi^+_{2,3})
    \right],\\
      \label{eq:timeichoi}
    \choi{\map{I}} = \frac{d}{2\i} \left[ (\openone_1 \otimes
      \Phi^+_{2,3}) (S_{1,2} \otimes \openone_3) - (S_{1,2}
      \otimes \openone_3) (\openone_1 \otimes \Phi^+_{2,3})
      \right].
\end{align}

Let us introduce maps $\map{R}_\pm$ and $\map{I}_\pm$ by
giving their Choi operators
\begin{align}
  \label{eq:clonerchoi}
  \choi{\map{R}_\pm} := \frac {2d}{d\pm 1} (P^\pm_{1,2}
  \otimes \openone_3) (\openone_1 \otimes \Phi^+_{2,3})
  (P^\pm_{1,2} \otimes \openone_3),\\
  \label{eq:rootswapchoi}
  \choi{\map{I}_\pm} := \frac{2d^2} {d^2 - 1} (Q^\pm_{1,2}
  \otimes \openone_3) (\openone_1 \otimes \Phi^+_{2,3})
  (Q^\mp_{1,2} \otimes \openone_3),
\end{align}
where $P^\pm := \frac12 (\openone \pm S)$ are the projectors
on the symmetric and antisymmetric subspace, respectively,
and $Q^+ = (Q^-)^\dag := \frac12 (\openone + z S)$, being
$S$ the swap operator and $z=(-1+\i\sqrt{d^2-1})/d$ a
complex phase. Maps $\map{R}_\pm$ and $\map{I}_\pm$ are
completely positive and trace preserving. We notice that map
$\map{R_+}$ is the universal optimal quantum
cloning~\cite{werner}.

We can now prove Propositions~\ref{prop:real-part}
and~\ref{prop:im-part}, that we restate for convenience.
\begin{thm}
  The map $\map{R}$ admits a statistical decomposition which
  is universally optimal, i.e. optimal at once for any input
  state $\rho\in\states{\sH}$, namely
  \begin{align}
    \label{eq:optimal-dec-re2}
    \map{R} = \frac{d_{\sH}+1}{2} \map{R}_+ -
    \frac{d_\sH-1}{2} \map{R}_-.
  \end{align}
\end{thm}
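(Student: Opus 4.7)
The plan is to verify the identity~\eqref{eq:optimal-dec-re2} at the level of Choi operators, then recognise it as a statistical decomposition in the sense of Proposition~\ref{prop:decomp}, and finally invoke Lemma~\ref{prop:optimal-cond} to certify universal optimality.

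First, I would check the identity using the explicit Choi operators in \eqref{eq:timerchoi} and \eqref{eq:clonerchoi}. The algebraic engine is the elementary identity $P^+_{12} X P^+_{12} - P^-_{12} X P^-_{12} = \tfrac12(S_{12} X + X S_{12})$, which is immediate from $P^\pm_{12} = \tfrac12(\openone \pm S_{12})$. Applying it with $X = \openone_1 \otimes \Phi^+_{23}$ and multiplying by $d_\sH$ yields
\begin{equation*}
\frac{d_\sH+1}{2}\choi{\map{R}_+} - \frac{d_\sH-1}{2}\choi{\map{R}_-} = \choi{\map{R}},
\end{equation*}
which by linearity and bijectivity of the Choi correspondence lifts to the map identity~\eqref{eq:optimal-dec-re2}.

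Next, to fit the framework of Proposition~\ref{prop:decomp}, I would pick arbitrary $c_\pm > 0$ with $c_+ + c_- = 1$ and set $\map{E}_\pm := c_\pm \map{R}_\pm$, $\lambda_\pm := \pm(d_\sH \pm 1)/(2 c_\pm)$. Since the cloners $\map{R}_\pm$ are CPTP, each $\map{E}_\pm$ is CP and $\map{E}_+ + \map{E}_-$ is TP, exactly as required. The probabilities $p_\pm(\rho) = \Tr[\map{E}_\pm(\rho)] = c_\pm$ are state-independent, so the associated error is
\begin{equation*}
\sum_i |\lambda_i|\, p_i(\rho) = \frac{d_\sH+1}{2} + \frac{d_\sH-1}{2} = d_\sH,
\end{equation*}
independent of $\rho$ and of the splitting $c_\pm$.

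Finally — and this is the main obstacle — I would show that this $d_\sH$ matches the universal lower bound of Lemma~\ref{prop:optimal-cond}. The key observation is that $\choi{\map{R}_+}$ and $\choi{\map{R}_-}$ are supported, on subsystems $1,2$, in the mutually orthogonal ranges of $P^+_{12}\otimes\openone_3$ and $P^-_{12}\otimes\openone_3$, so the positive and negative parts of $\choi{\map{R}}$ separate cleanly and
\begin{equation*}
|\choi{\map{R}}| = \frac{d_\sH+1}{2}\choi{\map{R}_+} + \frac{d_\sH-1}{2}\choi{\map{R}_-}.
\end{equation*}
The TP property $\Tr_{12}[\choi{\map{R}_\pm}] = \openone_3$ then gives $\Tr[|\choi{\map{R}}|(\openone_{12}\otimes\sigma)] = d_\sH$ for every $\sigma \in \states{\sH}$. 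Lemma~\ref{prop:optimal-cond} thus certifies that every statistical decomposition of $\map{R}$ incurs error at least $d_\sH$ on every input, and the decomposition above attains this bound for all $\rho$, so it is universally optimal.
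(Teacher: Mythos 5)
Your proof is correct and follows essentially the same route as the paper's: verify the decomposition at the Choi level, use the orthogonal supports and positivity of $\choi{\map{R}_\pm}$ to evaluate $|\choi{\map{R}}|$ and hence the lower bound $d_\sH$ of Lemma~\ref{prop:optimal-cond}, and check that the instrument built from the CPTP cloners attains this bound for every input state. Your write-up is merely more explicit in places (the identity $P^+XP^+-P^-XP^-=\tfrac12(SX+XS)$ and the arbitrary splitting $c_\pm$, where the paper fixes $c_\pm=\tfrac12$), but the argument is the same.
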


\begin{proof}
  The fact that Eq.~\eqref{eq:optimal-dec-re2} is a
  statistical decomposition follows by direct inspection.

  For optimality, notice that the right-hand side of
  Eq.~\eqref{eq:optimal-cond} can be explicitly computed as
  \begin{align*}
    & \min_{\sigma\in\states{\sH}} \Tr \left[
      |\choi{\map{R}}|
      \ (\openone_{\sH\otimes\sH}\otimes\sigma) \right] \\ =
    & \frac{d_\sH+1}2 \min_{\sigma\in\states{\sH}} \Tr
    \left[ |\choi{\map{R}_+}|
      \ (\openone_{\sH\otimes\sH}\otimes\sigma) \right] +
    \frac{d_\sH-1}2 \min_{\sigma\in\states{\sH}} \Tr \left[
      |\choi{\map{R}_-}|
      \ (\openone_{\sH\otimes\sH}\otimes\sigma) \right] \\ =
    & d_\sH,
  \end{align*}
  where first inequality follows from the orthogonality and
  positive semidefiniteness of $\choi{\map{R}_\pm}$ and the
  second equality follows from the fact that $\map{R}_\pm$ are trace-preserving, i.e.
  $\Tr_\sK[\choi{\map{R}_\pm}] = \openone_\sH$. The
  decomposition~(\ref{eq:optimal-dec-real}), once rewritten
  in the form of Proposition~\ref{prop:decomp}, becomes
  \begin{equation*}
    \map{R} =
    \lambda_+\frac{\map{R}_+}{2}-\lambda_-\frac{\map{R}_-}{2},
  \end{equation*}
  where $\lambda_\pm:=(d_\sH \pm 1)$, due to the fact that
  both $\map{R}_+$ and $\map{R}_-$ are CPTP, and, therefore,
  $\frac{\map{R}_+}{2}$ and $\frac{\map{R}_-}{2}$ constitute
  a legitimate quantum instrument. By direct evaluation, the
  left-hand side of Eq.~(\ref{eq:optimal-cond}) is equal to
  $(d_\sH+1)/2 + (d_\sH-1)/2 = d_\sH$ for any input state
  $\rho$, because $p(+)=p(-)=1/2$ for any state
  $\rho$. Therefore, the optimality holds \emph{for any}
  input state $\rho$.
\end{proof}

\begin{thm}
  The map $\map{I}$ admits a statistical decomposition,
  which is universally optimal, i.e. optimal at once for any
  input state $\rho\in\states{\sH}$, namely
  \begin{align}
    \label{eq:optimal-dec-im2}
    \map{I} = \frac{\sqrt{d_\sH^2-1}}{2} \map{I}_+ -
    \frac{\sqrt{d_\sH^2-1}}{2} \map{I}_-,
  \end{align}
\end{thm}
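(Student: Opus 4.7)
The plan is to mirror the proof of Proposition~\ref{prop:real-part}, the only genuinely new ingredient being the verification of an operator-orthogonality property that, for the imaginary part, does not come from a projector identity but must be extracted from the specific choice of the complex phase $z$. First I would verify by direct inspection that \eqref{eq:optimal-dec-im2} reproduces \eqref{eq:timeirdef}: expanding $\Tr[\map{I}_\pm(\rho)(A\otimes B)] = \frac{2d_\sH}{d_\sH^2-1}\Tr[(\openone\otimes\rho)\,Q^\mp(A\otimes B)Q^\pm]$ and using the identities $S(A\otimes B)S = B\otimes A$, $|z|^2=1$, $z+\bar z=-2/d_\sH$, and $z-\bar z=2\i\sqrt{d_\sH^2-1}/d_\sH$, the terms in $\map{I}_+-\map{I}_-$ proportional to $\Tr[A]\Tr[B\rho]$ and $\Tr[B]\Tr[A\rho]$ cancel, while the surviving $(z-\bar z)(\Tr[A\rho B]-\Tr[B\rho A])$ produces exactly $(2/\sqrt{d_\sH^2-1})\Tr[\rho\,[A,B]/(2\i)]$. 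That this is a legitimate statistical decomposition in the sense of Proposition~\ref{prop:decomp} then follows because the factorization $\choi{\map{I}_\pm}=V^\pm(V^\pm)^\dag$ with $V^\pm:=\sqrt{2d_\sH^2/(d_\sH^2-1)}\,(Q^\pm_{1,2}\otimes\openone_3)(\openone_1\otimes\ket{\Phi^+}_{2,3})$ makes complete positivity transparent, while trace-preservation $\Tr_\sK[\choi{\map{I}_\pm}]=\openone_\sH$ is a short partial-trace computation using $\Tr_2[S]=\openone$. Rewriting \eqref{eq:optimal-dec-im2} as $\map{I}=\sqrt{d_\sH^2-1}\,(\map{I}_+/2)-\sqrt{d_\sH^2-1}\,(\map{I}_-/2)$ then exhibits $\{\map{I}_\pm/2\}$ as a valid quantum instrument.

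For optimality I would apply Lemma~\ref{prop:optimal-cond} and evaluate $\min_\sigma\Tr[|\choi{\map{I}}|(\openone\otimes\sigma)]$. The main obstacle---and the only essentially new step compared to the real case---is to show that $\choi{\map{I}_+}$ and $\choi{\map{I}_-}$ have \emph{mutually orthogonal supports}, so that $\choi{\map{I}}=(\sqrt{d_\sH^2-1}/2)(\choi{\map{I}_+}-\choi{\map{I}_-})$ is, up to a positive scalar, the Jordan decomposition of the Hermitian operator $\choi{\map{I}}$, and hence $|\choi{\map{I}}|=(\sqrt{d_\sH^2-1}/2)(\choi{\map{I}_+}+\choi{\map{I}_-})$. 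Using $\choi{\map{I}_\pm}=V^\pm(V^\pm)^\dag$, orthogonality of the ranges reduces to the single identity $V^{+\dag}V^-=0$; contracting the $\ket{\Phi^+}_{2,3}$ factors simplifies this further to $\Tr_2[(Q^-)^2]=0$, and expanding $(Q^-)^2=\tfrac14\bigl((1+\bar z^2)\openone+2\bar z\,S\bigr)$ together with $\Tr_2[S]=\openone_1$ turns it into the quadratic relation $\bar z^2+(2/d_\sH)\bar z+1=0$. A direct substitution confirms that $z=(-1+\i\sqrt{d_\sH^2-1})/d_\sH$ satisfies precisely this quadratic: this is the whole algebraic reason behind that particular choice of phase.

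Once orthogonality is established, the lower bound of Lemma~\ref{prop:optimal-cond} evaluates to $\tfrac{\sqrt{d_\sH^2-1}}{2}\bigl(\Tr[\choi{\map{I}_+}(\openone\otimes\sigma)]+\Tr[\choi{\map{I}_-}(\openone\otimes\sigma)]\bigr)=\sqrt{d_\sH^2-1}$ independently of $\sigma$, by trace-preservation of $\map{I}_\pm$. On the other hand the actual statistical error of the decomposition is $\sum_i|\lambda_i|p(i)=\sqrt{d_\sH^2-1}(p(+)+p(-))=\sqrt{d_\sH^2-1}$, since $p(\pm)=\tfrac12\Tr[\map{I}_\pm(\rho)]=\tfrac12$ for every state $\rho$. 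The lower bound is therefore saturated for every input state, establishing universal optimality.
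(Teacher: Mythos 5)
Your proof is correct and follows essentially the same route as the paper: verify the decomposition by direct computation, exhibit $\{\map{I}_\pm/2\}$ as a quantum instrument, and match the lower bound of Lemma~\ref{prop:optimal-cond} using orthogonality of the supports of $\choi{\map{I}_\pm}$ together with trace preservation. The one place you go beyond the paper is the orthogonality step, which the paper explicitly defers to a forthcoming publication; your reduction of $(V^+)^\dag V^-=0$ to $\Tr_2[(Q^-)^2]=0$ and hence to the quadratic $\bar z^2+(2/d_\sH)\bar z+1=0$ satisfied by the chosen phase $z$ checks out and cleanly explains why that particular $z$ is used.
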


\begin{proof}
  The fact that Eq.~\eqref{eq:optimal-dec-im2} is a
  statistical decomposition follows by direct inspection.

  For optimality, notice that the right-hand side
  of~(\ref{eq:optimal-cond}) can be explicitly computed as
  \begin{align*}
    & \min_{\sigma\in\states{\sH}} \Tr \left[
      |\choi{\map{I}}|
      \ (\openone_{\sH\otimes\sH}\otimes\sigma) \right] \\ =
    & \frac{\sqrt{d_\sH^2-1}}2 \min_{\sigma\in\states{\sH}}
    \Tr \left[ |\choi{\map{I}_+}|
      \ (\openone_{\sH\otimes\sH}\otimes\sigma) \right] +
    \frac{\sqrt{d_\sH^2-1}}2 \min_{\sigma\in\states{\sH}}
    \Tr \left[ |\choi{\map{I}_-}|
      \ (\openone_{\sH\otimes\sH}\otimes\sigma) \right] \\ =
    & \sqrt{d_\sH^2-1},
  \end{align*}
  where first inequality follows from the orthogonality and
  positive semidefiniteness of $\choi{\map{I}_\pm}$ and the
  second equality follows from the fact that $\map{I}_\pm$ are trace-preserving, i.e.
  $\Tr_\sK[\choi{\map{I}_\pm}] = \openone_\sH$. The proof of
  orthogonality between $\choi{\map{I}_\pm}$ is lengthy but
  not difficult, the details will be spelled out in a
  forthcoming paper by the present authors. The
  decomposition~(\ref{eq:optimal-dec-im}), once rewritten in
  the form of Proposition~\ref{prop:decomp}, becomes
  \begin{equation*}
    \map{I} =
    \lambda\frac{\map{I}_+}{2}-\lambda\frac{\map{I}_-}{2},
  \end{equation*}
  where $\lambda:=\sqrt{d_\sH^2-1}$. Arguments, analogous to
  those used in the proof of
  Proposition~\ref{prop:real-part}, show that the optimality
  holds \emph{for any} input state $\rho$.
\end{proof}

\end{document}